\documentclass[11pt]{article}
\usepackage{ifthen,amsmath,amssymb,amsthm,latexsym,color,epsfig,bm,verbatim,hyperref}
\usepackage{multirow}
\usepackage{newtype}

\usepackage{pstricks}
\usepackage{pst-tree}
\usepackage{pst-node}
\usepackage{pst-plot}
\usepackage{wrapfig}

\newcommand{\GF}[1]{{\mathrm{GF}(#1)}}

\newcommand{\RN}{\mathbb{R}}

\newcommand{\ER}{\ensuremath{\bm{\exists \RN}}}

\newtype{\class}{\mathbf}
\class{L} \class{P} \class{PH} \class{NP} \class{coNP}  \class{PSPACE}

\newtheorem{theorem}{Theorem}[section]
\newtheorem{lemma}[theorem]{Lemma}

\newtheorem{corollary}[theorem]{Corollary}

\theoremstyle{definition}

\newtheorem{claim}{Claim}
\newtheorem{remark}[claim]{Remark}
\makeatletter
\@addtoreset{claim}{theorem}
\makeatother

\newcommand{\SAT}{\textsf{\upshape SAT}}
\newcommand{\STRETCH}{\textsf{\upshape STRETCHABILITY}}

\psset{unit=1pt}

\begin{document}

\title{Picking Planar Edges; or, Drawing a Graph with a Planar Subgraph}

\author{
{Marcus Schaefer
} \\
{\small School of Computing} \\[-0.13cm]
{\small DePaul University} \\[-0.13cm]
{\small Chicago, Illinois 60604, USA} \\[-0.13cm]
{\small \tt mschaefer@cdm.depaul.edu}\\[-0.13cm]
}

\maketitle

\begin{abstract}
 Given a graph $G$ and a subset $F \subseteq E(G)$ of its edges, is there a drawing of $G$ in which all edges of $F$ are free of crossings? We show that this question can be solved in polynomial time using a Hanani-Tutte style approach. If we require the drawing of $G$ to be straight-line, and allow at most one crossing along each edge in $F$, the problem turns out to be as hard as the existential theory of the real numbers.
\end{abstract}

\section{Introduction}

Angelini, Binucci, Da Lozzo, Didimo, Grilli, Montecchiani, Patrignani, and Tollis~\cite{ABDLDGMPT13a}
asked the following problem:
\begin{quote}
``Given a non-planar graph $G$ and a planar subgraph $S$ of $G$, decide whether $G$ admits a drawing $\Gamma$
such that the edges of $S$ are not crossed in $\Gamma$, and compute $\Gamma$ if it exists''.
\end{quote}
Their paper studies two variants of this problem: the unrestricted problem in which $\Gamma$ is an arbitrary poly-line drawing, and the straight-line variant, in which $\Gamma$ is restricted to straight-line drawings. Let us call these the {\em partial planarity} and the {\em geometric partial planarity} problem. It seems that these two problems are new to the literature, Maybe the closest previous variant is the (also very recent) notion of partially embedded planarity~\cite{ABFJKPR10}, which differs in that a particular embedding of $S$ is given, and the desired planar embedding of $G$ has to extend the given embedding of $S$. For partially embedded planarity, a linear-time testing algorithm is known~\cite{ABFJKPR10}, as well as an obstruction set~\cite{JKR13}.

David Eppstein commented in his blog~\cite{E13b}:
\begin{quote}
``If you're given a graph in which some edges are allowed to participate in crossings while others must remain uncrossed, how can you draw it, respecting these constraints? Unfortunately the authors were unable to determine the computational complexity of this problem, and leave it as an interesting open problem''.
\end{quote}
In other words, given a graph $G$ and a subset of its edges $F \subseteq E(G)$, is there a (straight-line) drawing of $G$ in which all edges of $F$ are free of crossings? The two formulations are equivalent, of course, but we slightly prefer the second, since it shows that we can specify for each edge whether it has to be planar (crossing-free) or not; we can pick the planar edges. Looking at planarity as a local requirement opens it up for combination with other properties; for example, what happens if we can specify a bound on the number of crossings along each edge, or on the number of bends? 


\subsection*{Previous Research}
Angelini, et al.~\cite{ABDLDGMPT13a} show that $(G,S)$ is always partially planar if $S$ is a spanning tree of $G$, even if the embedding of $S$ is required to be a straight-line embedding. For geometric partial planarity, they show that $(G,S)$ can always be realized if $S$ is a spanning spider or caterpillar, even in polynomial area. However, they also exhibit examples of $(G,S)$ where $S$ is a spanning tree of $G$ for which $(G,S)$ has no geometric partial realization. There are further algorithms in the paper to test geometric partial planarity for various types of spanning trees $S$, though in some cases the layout algorithms require exponential area.

\subsection*{Our Contribution}
In Section~\ref{sec:PPHT} we show that using a Hanani-Tutte style approach successfully settles the complexity of the poly-line variant of the problem: partial planarity can be solved in polynomial time. This is a further example of a planarity-style problem in which there is no traditional polynomial-time algorithm for the problem, but the Hanani-Tutte approach leads to a solution. Other examples of this are surveyed in~\cite{S13a}.

We have to leave the complexity of the straight-line variant open, but there is a good chance that it is as hard as the existential theory of the reals (see~\cite{S10b}). One indication for this is that the layout algorithm for geometric partial planarity suggested in~\cite{ABDLDGMPT13a} needs exponential area on some inputs. Secondly, the result is true if we replace planarity with $1$-planarity: partial geometric $1$-planarity is as hard as the existential theory of the reals, as we will see in Section~\ref{sec:GP1P}. In comparison, the special case of geometric $1$-planarity is \NP-complete (this follows from known results in the literature, see Theorem~\ref{thm:g1pNP}). 

\section{Partial Planarity and Hanani-Tutte}\label{sec:PPHT}

We assume that the reader is somewhat familiar with the Hanani-Tutte characterization of planarity (see~\cite{S13a,S14}). Briefly, Hanani~\cite{C34} and Tutte~\cite{T70} established the following algebraic characterization of planar graphs: a graph is planar if and only if it has a drawing in which every two independent edges cross evenly. This criterion can be rephrased as a linear system over $\GF{2}$: Create variables $x_{e,v}$ for every $e \in E(G)$ and $v \in V(G)$, and let $i_D(e,f)$ denote the number of times two edges $e$ and $f$ cross in a drawing $D$ of $G$.
Fix an arbitrary drawing $D$ of $G$ (e.g.\ a convex drawing).
Let $P(D)$ be the following system over $\GF{2}$:
\[i_{D}(uv, st) + x_{uv,s} + x_{uv,t} + x_{st, u} + x_{st,v} \equiv 0 \bmod{2},\]
for every pair of independent edges $uv, st \in E(G)$. Then $G$ is planar if and only if $P(D)$ is solvable.
The heart of the proof is showing that solvability of $P(D)$ leads to a drawing of $G$, we will not explain this part
(see~\cite[Section 3]{S13a} for a detailed discussion). The other direction is a consequence of the following well-known fact about drawings: as far as the crossing parity between pairs of independent edges is concerned, one can turn any drawing of a graph into any other drawing of the graph by performing a set of $(e,v)$-moves, where an {\em $(e,v)$-move} consists of taking a small piece of $e$, moving it close to $v$ and then pushing it over $v$; the effect of an $(e,v)$-move is that the crossing parity between $e$ and any edge incident to $v$ changes. Imagining one drawing of a graph morphing into another, it is easy to believe that $(e,v)$-moves are sufficient to get from one drawing to another. We state this result without proof. For further details see~\cite[Section 4.6]{dL13} or~\cite[Lemma 1.12]{S14}.

\begin{lemma}\label{lem:cob}
 If $D$ and $D'$ are two drawings of the same graph $G$, then there is a set of $(e,v)$-moves so that
 \[i_{D'}(uv,st) \equiv i_{D}(uv,st) + x_{uv,s} + x_{uv,t} + x_{st, u} + x_{st,v} \bmod{2},\]
 for all edges $uv$, $st \in E(G)$, where $x_{e,v} = 1$ if an $(e,v)$-move is performed, and $x_{e,v} = 0$ otherwise.
\end{lemma}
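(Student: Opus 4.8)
The plan is to join the two drawings by a continuous family of drawings and to extract the required $(e,v)$-moves from the discrete ``events'' that occur along the way, exactly as Reidemeister moves are extracted from an isotopy of knot diagrams. Realize $D$ and $D'$ as the endpoints $D_0 = D$ and $D_1 = D'$ of a smooth one-parameter family $(D_t)_{t\in[0,1]}$ of drawings of $G$ (the vertices travel along smooth paths, and every edge sweeps through smooth curves joining the relevant vertex positions); if the vertex positions of $D$ and $D'$ disagree, first isotope the vertices of $D$ onto those of $D'$, dragging the edges along, which is itself such a family. After a generic perturbation of the family, keeping $D_0$ and $D_1$ fixed, the combinatorial type of $D_t$ is constant except at finitely many parameters $0 < t_1 < \dots < t_k < 1$, at each of which a single codimension-one event takes place.

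Next I would enumerate the possible events and their effect on the parity of $i_{D_t}(e,f)$ for independent edges $e,f$. There are: (i) a tangency of the interiors of two edges (two crossings appear or disappear, so each $i(e,f)$ changes by $0$ or $\pm 2$, and all parities are preserved); (ii) a triple point of three edge interiors (no pairwise crossing number changes); (iii) a self-tangency of one edge, or the interior of an edge passing over one of its own endpoints (this only affects self-crossings, or crossings of an edge with edges adjacent to it, never an independent pair); (iv) the interior of an edge $e$ passing transversally over a vertex $v$ with $v \notin e$, which is precisely an $(e,v)$-move and flips the parity of $i(e,f)$ for every edge $f$ incident to $v$; and (v) a vertex $v$ passing transversally over the interior of an edge $f$ not incident to $v$, which flips the parity of $i(f,e)$ for every edge $e$ incident to $v$ and is therefore again an $(f,v)$-move. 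Events in which two vertices would coincide, or a vertex would cross a non-incident vertex, are removed by a further generic perturbation of the paths, which is possible because we are in the plane. In particular, the only events capable of changing the parity of a crossing number between independent edges are $(e,v)$-moves.

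Finally I would do the bookkeeping. Parity changes compose additively over $\GF{2}$ and are order-independent, so the total change equals the sum, over all events, of the ``flip vector'' of that event; grouping by $(e,v)$ and letting $x_{e,v}\in\GF{2}$ be the number of $(e,v)$-moves along the family taken mod $2$ (two $(e,v)$-moves cancel, and no move with $v\in e$ is ever produced, matching the convention in $P(D)$), the total change in the $\{uv,st\}$-entry is the sum of $x_{e,v}$ over those $(e,v)$-moves that flip $i(uv,st)$. An $(e,v)$-move flips $i(uv,st)$ exactly when $e\in\{uv,st\}$ and $v$ is an endpoint of the other edge; the four moves meeting this condition are $(uv,s)$, $(uv,t)$, $(st,u)$, and $(st,v)$, which yields $i_{D'}(uv,st) \equiv i_D(uv,st) + x_{uv,s} + x_{uv,t} + x_{st,u} + x_{st,v} \bmod 2$, as claimed.

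I expect the principal obstacle to be the rigorous justification of the genericity step: that the space of drawings of $G$ admits a stratification in which a generic path meets only the codimension-one strata listed above, and that no compound or exotic event (say, an edge interior crossing a vertex while simultaneously becoming tangent to a third edge) must be considered. This is the graph-drawing analogue of the transversality input underlying Reidemeister's theorem and Cerf theory; it is routine but tedious for curves in the plane. An alternative that avoids part of this is to treat first the special case in which $D'$ is a fixed canonical drawing (vertices in convex position, edges drawn as straight chords): straighten the edges one at a time, each by a homotopy relative to its endpoints, recording an $(e,v)$-move whenever the moving edge sweeps over a vertex $v$; applying this reduction to $D$ and running it backwards from the canonical drawing to $D'$ then exhibits the passage $D\to D'$ as a sequence of $(e,v)$-moves up to an ambient homeomorphism of the plane, which preserves every crossing number.
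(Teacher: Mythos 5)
The paper states Lemma~\ref{lem:cob} without proof, deferring to the cited references, so there is no in-paper argument to compare against; your proposal supplies the standard proof, and it is correct in outline. Your enumeration of codimension-one events is right, and the bookkeeping at the end is exactly the point: the only events that can change the crossing parity of an \emph{independent} pair are edge-over-vertex (or, equivalently, vertex-over-edge) passages, and grouping them by type $(e,v)$ yields the stated formula. Two remarks. First, the lemma as printed says ``for all edges $uv$, $st$,'' but it can only be meant for independent pairs (for adjacent edges the crossing parity can change with no $(e,v)$-move at all, e.g.\ by a local change of rotation at the shared vertex); your case (iii) correctly sets such events aside, so your argument proves the intended statement. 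Second, you are right that the generic one-parameter family is the only step needing real care, and your fallback --- isotope the vertices into place, then homotope the edges one at a time rel endpoints, recording an $(e,v)$-move each time the moving edge sweeps over a vertex --- is in fact the cleaner route and is essentially how the cited sources (de Longueville, and Schaefer's Lemma~1.12) argue; it replaces the Cerf-theoretic transversality of a family of whole drawings by the elementary fact that a homotopy of a single arc rel endpoints changes its crossing parity with a fixed arc $f$ only when it passes over an endpoint of $f$ (since the complement of the two endpoints of $f$ gives a $\ZN_2$ intersection-pairing argument). Either version is acceptable; no gap beyond the routine transversality you already flagged.
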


For a graph $G$ with a set of edges $F \subseteq E(G)$, fix an arbitrary drawing $D$ of $G$, and let $P(D,F)$ be the following system of equations over $\GF{2}$:
\[i_{D}(uv, st) + x_{uv,s} + x_{uv,t} + x_{st, u} + x_{st,v} \equiv 0 \bmod{2},\]
for every pair of independent edges $uv \in F$ and $st \in E(G)$.

\begin{lemma}\label{lem:PGS}
 $G$ has a drawing $\Gamma$ in which $F$ is free of crossings if and only if $P(D,F)$ is solvable.
\end{lemma}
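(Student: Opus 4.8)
The plan is to prove both implications by converting between drawings of $G$ and $\GF{2}$‑assignments via Lemma~\ref{lem:cob}, and then, for the hard direction, invoking the technique of removing independently even crossings.

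\textbf{Forward direction ($\Rightarrow$).} Suppose $\Gamma$ is a drawing of $G$ in which no edge of $F$ is crossed. Apply Lemma~\ref{lem:cob} to the two drawings $D$ and $\Gamma$: there is a set of $(e,v)$‑moves, recorded by values $x_{e,v}\in\{0,1\}$, such that
\[
 i_{\Gamma}(uv,st)\equiv i_{D}(uv,st)+x_{uv,s}+x_{uv,t}+x_{st,u}+x_{st,v}\bmod 2
\]
for all edges $uv,st\in E(G)$. Restrict attention to independent pairs with $uv\in F$. Since $F$ is crossing‑free in $\Gamma$ we have $i_{\Gamma}(uv,st)=0$, so this choice of the $x_{e,v}$ satisfies every equation of $P(D,F)$; hence $P(D,F)$ is solvable.

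\textbf{Backward direction ($\Leftarrow$).} Suppose $P(D,F)$ has a solution $(x_{e,v})$. Perform on $D$ the $(e,v)$‑moves indicated by this solution, obtaining a drawing $D'$ of $G$; by the defining effect of $(e,v)$‑moves (equivalently, the parity identity of Lemma~\ref{lem:cob}),
\[
 i_{D'}(uv,st)\equiv i_{D}(uv,st)+x_{uv,s}+x_{uv,t}+x_{st,u}+x_{st,v}\equiv 0\bmod 2
\]
for every independent pair $uv\in F$, $st\in E(G)$, the last congruence because $(x_{e,v})$ solves $P(D,F)$. Thus in $D'$ every edge of $F$ is \emph{independently even}: it crosses every edge not adjacent to it an even number of times. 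Now I would appeal to the strong Hanani--Tutte machinery for subsets of edges (removing independently even crossings): a graph drawn so that a set $E_0$ of edges is independently even can be redrawn so that no edge of $E_0$ is involved in any crossing. Applying this with $E_0$ the set of independently even edges of $D'$ (which contains $F$) yields a drawing $\Gamma$ of $G$ in which $F$ is free of crossings, as required.

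The main obstacle is precisely this last step in the backward direction. One must be careful that the redrawing removes \emph{all} crossings on the edges of $F$ — not only crossings with independent edges, but also crossings between two edges of $F$ and between an edge of $F$ and an adjacent edge, none of which are controlled by equations of $P(D,F)$; this is exactly what the strong form of the result guarantees, and if a self‑contained argument is preferred one redraws $D'$ edge by edge, each time using that an independently even edge can be pulled off all other edges because its rotation can be emptied. The two easy conversions through Lemma~\ref{lem:cob} are routine once that theorem is in hand.
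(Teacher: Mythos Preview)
Your argument is correct and follows essentially the same route as the paper: both directions go through Lemma~\ref{lem:cob}, and the nontrivial implication (solvability $\Rightarrow$ crossing-free drawing) is completed by the very result you call the ``strong Hanani--Tutte machinery for subsets of edges,'' which in the paper is Lemma~\ref{lem:PTindeven} (Pelsmajer--Schaefer--\v{S}tefankovi\v{c}). Your closing remarks about adjacent crossings are exactly the point that lemma handles, so nothing further is needed.
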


Since the solvability of a linear system of equations over a field (in this case $\GF{2}$) can be decided in polynomial time, the following corollary is immediate.

\begin{corollary}
  Given a graph $G$ with a set of edges $F \subseteq E(G)$, it can be decided in polynomial time whether $G$ has a drawing in which all edges in $F$ are free of crossings.
\end{corollary}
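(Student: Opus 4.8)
The plan is to obtain the corollary from Lemma~\ref{lem:PGS} by a routine linear-algebra argument, so the real content lies in a proof of Lemma~\ref{lem:PGS}, which I would run in the spirit of the Hanani--Tutte argument recalled above.

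For the corollary itself: fix any \emph{good} drawing $D$ of $G$ (for instance the convex drawing on the vertices in general position), read off the parities $i_D(uv,st)\bmod 2$ over all pairs of independent edges, and write down the system $P(D,F)$. It has $O(|E(G)|^2)$ equations in $O(|E(G)|\cdot|V(G)|)$ variables over $\GF{2}$, all of whose coefficients are computable from $D$ in polynomial time, so its solvability is decidable in polynomial time by Gaussian elimination. By Lemma~\ref{lem:PGS} this is exactly the question whether $G$ has a drawing in which every edge of $F$ is crossing-free.

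To prove Lemma~\ref{lem:PGS} I would handle the two directions using Lemma~\ref{lem:cob}. For ``only if'', let $\Gamma$ be a drawing of $G$ with $F$ crossing-free, so $i_\Gamma(uv,st)=0$ whenever $uv\in F$ and $st$ is independent of $uv$. Applying Lemma~\ref{lem:cob} to the pair $D,\Gamma$ yields values $x_{e,v}\in\{0,1\}$ with $i_\Gamma(uv,st)\equiv i_D(uv,st)+x_{uv,s}+x_{uv,t}+x_{st,u}+x_{st,v}\pmod 2$ for all edges $uv,st$; reading this identity only for the independent pairs with $uv\in F$ shows that $(x_{e,v})$ solves $P(D,F)$. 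For ``if'', take a solution $(x_{e,v})$ of $P(D,F)$ and, starting from the good drawing $D$, perform the $(e,v)$-move for each pair with $x_{e,v}=1$. Since an $(e,v)$-move only flips the crossing parity of $e$ with edges incident to $v$ (and here $v$ is never an endpoint of $e$), it preserves goodness, so the resulting drawing $D'$ of $G$ is still good and satisfies $i_{D'}(uv,st)\equiv 0\pmod 2$ for every $uv\in F$ and every independent $st$. In a good drawing an edge meets an adjacent edge, and itself, in $0$ crossings, so in $D'$ every edge of $F$ crosses \emph{every} other edge of $G$ an even number of times (which, en passant, forces $(V(G),F)$ to be planar). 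Now I would invoke the Pelsmajer--Schaefer--\v{S}tefankovi\v{c} theorem on removing independently even crossings: a drawing in which a distinguished set of edges crosses every other edge an even number of times can be redrawn so that those edges are not crossed at all. Applying it with the distinguished set $F$ turns $D'$ into the desired drawing $\Gamma$.

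The routine parts are the two appeals to Lemma~\ref{lem:cob} and the bookkeeping for the corollary. The substantive ingredient is the redrawing step in the ``if'' direction, which I would import from the Pelsmajer--Schaefer--\v{S}tefankovi\v{c} result; if one wanted a self-contained proof, reproving that redrawing --- the analogue of the part deliberately omitted from the classical Hanani--Tutte argument above, now carried out on the subgraph $(V(G),F)$ while the remaining edges of $G$ are rerouted inside the faces of the resulting plane embedding --- would be the main obstacle.
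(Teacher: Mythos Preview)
Your argument is essentially the paper's: reduce the corollary to solvability of $P(D,F)$ over $\GF{2}$, and prove Lemma~\ref{lem:PGS} via Lemma~\ref{lem:cob} for one direction and the Pelsmajer--Schaefer--\v{S}tefankovi\v{c} redrawing lemma (Lemma~\ref{lem:PTindeven}) for the other.

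One point needs correction. Your claim that $(e,v)$-moves preserve goodness is not right: pushing $e=uw$ over a vertex $v$ adjacent to $u$ creates a crossing between $e$ and the adjacent edge $vu$, and in general such moves can also produce multiple crossings between the same pair of edges. Fortunately this detour is unnecessary. Lemma~\ref{lem:PTindeven} only requires the distinguished edges to be \emph{independently} even, i.e., to cross every \emph{independent} edge evenly; you already have this for every edge of $F$ in $D'$ directly from the equations of $P(D,F)$, with no appeal to goodness or to crossings with adjacent edges. Dropping the goodness claim and invoking Lemma~\ref{lem:PTindeven} as stated gives exactly the paper's proof.
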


The running time of the algorithm is on the order $O((nm)^3)$, where $n = |V(G)|$ and $m = |E(G)|$, since systems of linear equations over a field can be solved in cubic time, and $P(D,F)$ can have as many as $O(nm)$ equations and $O(nm)$ variables (note that we can assume that $|F| = O(n)$: if the graph $(V(G),F)$ is not planar, then there is no drawing of $G$ in which all edges of $F$ are free of crossings; on the other hand, we cannot assume that $G$ is planar). This may seem impractical at a first glance, but recent experiments with an algorithm of this type have been quite successful~\cite{GMS1?}.

The hard direction in the proof of Lemma~\ref{lem:PGS} is covered by the following result from an earlier paper on the independent odd crossing number~\cite{PSS10}. We call an edge $e$ in a drawing $D$ {\em independently even} if it crosses every edge independent of it an even number of times. More formally, $i_D(e,f) \equiv 0 \bmod{2}$ for every $f$ which is independent of $e$.

\begin{lemma}[Pelsmajer, Schaefer, \v{S}tefankovi\v{c}~\cite{PSS10}]\label{lem:PTindeven}
 If $D$ is a drawing of a graph $G$ in the plane, then
 $G$ has a drawing in which the independently even edges of $D$ are crossing-free
 and every pair of edges crosses at most once.
\end{lemma}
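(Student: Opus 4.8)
The plan is to follow the edge-by-edge redrawing methodology developed for the Hanani--Tutte theorem; indeed this lemma is a ``with cleanup'' strengthening of the strong Hanani--Tutte theorem, which it recovers by taking for $D$ a drawing in which \emph{all} pairs of independent edges cross evenly. Write $E_0 \subseteq E(G)$ for the set of independently even edges of $D$. I would first produce \emph{some} drawing of $G$ in which every edge of $E_0$ is crossing-free, in two steps, and only afterwards reduce the number of crossings between the remaining pairs to at most one.

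The first step --- the one special to the \emph{independently} even (rather than plain even) setting --- is to redraw $G$ so that every edge of $E_0$ crosses \emph{every} edge of $G$, whether adjacent to it or not, an even number of times, while changing the parity of no pair of independent edges; each edge of $E_0$ then stays independently even and in fact now crosses everything evenly. The hypothesis already supplies this for independent pairs, so the work is to kill the crossing parities of the pairs made up of an edge of $E_0$ and an edge sharing an endpoint with it, without disturbing the independent pairs. This is the delicate part: the natural local moves --- rerouting an end of an edge of $E_0$ past some of the edges at that endpoint, or dragging a finger of it over some other vertex (an $(e,v)$-move) --- each flip a whole bundle of crossing parities at once, so one has to combine such moves so that their net effect cancels on the independent pairs. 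Making this work is exactly the content of the ``removing independently even crossings'' argument of~\cite{PSS10}, and I expect it to be the main obstacle.

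The second step makes the edges of $E_0$ crossing-free one at a time, now that each of them crosses everything evenly. Here I would invoke the earlier ``removing even crossings'' lemma of the same authors: an edge $e$ that crosses every other edge of a drawing an even number of times can be made crossing-free by a redrawing that changes the number of crossings of no other pair of edges --- intuitively one works inside a thin strip around $e$, in which every other edge meets $e$ an even number of times and can therefore be reconnected so as to leave $e$ entirely, the even parity being precisely what makes the reconnection possible. Since this redrawing alters only crossings that involve $e$, freeing one edge of $E_0$ neither re-crosses an edge of $E_0$ freed earlier nor stops a not-yet-processed edge of $E_0$ from crossing everything evenly (its crossings with the just-freed edge are now $0$). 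Iterating over $E_0$ gives a drawing of $G$ in which every edge of $E_0$ is crossing-free.

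Finally, to obtain at most one crossing per pair, take that drawing and, among \emph{all} drawings of $G$ in which the edges of $E_0$ are crossing-free --- a nonempty family by the previous step --- choose one with the fewest crossings overall. In such a drawing no edge crosses itself, no two adjacent edges cross, and no two edges cross twice, by the usual bigon/lens-exchange arguments: each exchange strictly decreases the number of crossings, and since every edge of $E_0$ is crossing-free it never occurs in the offending configuration and never runs through the (vertex-free) lens that is re-routed, so the exchange keeps us inside the family --- contradicting minimality. The minimising drawing is the desired~$\Gamma$.
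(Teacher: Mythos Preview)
The paper does not prove this lemma at all: it is quoted verbatim from~\cite{PSS10} and used as a black box in the proof of Lemma~\ref{lem:PGS}. There is therefore no in-paper argument to compare your proposal against.

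As for the proposal itself, the overall three-stage architecture (upgrade ``independently even'' to ``even'', then free the even edges one by one via the tubular-strip argument, then minimise crossings to enforce the at-most-one-crossing condition) is a reasonable outline, and your second and third steps are fine as stated. The difficulty is that your first step is the whole content of the lemma: turning an independently even edge into an even edge while keeping the other $E_0$-edges independently even is precisely what~\cite{PSS10} has to establish, and you explicitly defer that step back to~\cite{PSS10}. So in effect your proposal, like the paper, reduces to ``cite~\cite{PSS10}''. If you want a self-contained proof, that first step is where the actual work lies; the moves you list (rotation changes at an endpoint, $(e,v)$-moves) do not by themselves give independent control over the adjacent-pair parities you need to kill, and the argument in~\cite{PSS10} proceeds rather differently from a move-combination calculation.
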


The proof of Lemma~\ref{lem:PTindeven} is constructive in the sense that the new drawing of $G$ can be found in polynomial time (there are no explicit time bounds, but a running time quadratic in $O(|G|)$ seems achievable).

\begin{proof}[Proof of Lemma~\ref{lem:PGS}]
 Suppose $P(D,F)$ is solvable, and fix a solution $x_{e,v} \in \{0,1\}$, for $e \in E(G), v \in V(G)$, for some initial drawing $D$ of $G$. Construct a drawing $D'$ from $D$ by performing an $(e,v)$-move for every $e \in E(G)$ and $v \in V(G)$ for which $x_{e,v} = 1$. Pick $uv \in F$ and let $st \in E(G)$ be an arbitrary edge independent of $uv$. Then
 \[i_{D'}(uv,st) = i_D(uv,st) + x_{uv,s} + x_{uv,t} + x_{st, u} + x_{st,v} \equiv 0 \bmod{2},\]
 since $x_{e,v}$ is a solution of the system $P(D,F)$. Thus $uv$ is independently even. Since $uv$ was arbitrary, all edges in $F$ are independently even, and, by Lemma~\ref{lem:PTindeven}, there is a drawing of $G$ in which all edges of $F$ are free of crossings, which is what we had to show.

 For the other direction, assume $G$ has a drawing $D'$ in which all edges of $F$ are free of crossings. By Lemma~\ref{lem:cob} we know that there is a set of $(e,v)$-moves so that
 \[i_{D'}(uv,st) \equiv i_D(uv,st) +  x_{uv,s} + x_{uv,t} + x_{st, u} + x_{st,v} \bmod{2}\]
 for all pairs of independent edges $uv$, $st \in E(G)$. Now if $uv \in F$, then $i_{D'}(uv,st) = 0$ for every edge $st \in E(G)$. In particular,
 \[i_D(uv,st) +  x_{uv,s} + x_{uv,t} + x_{st, u} + x_{st,v} \equiv i_{D'}(uv,st) \equiv 0 \bmod{2},\]
 so $x_{e,v}$ is a solution to $P(D,F)$, which is what we had to show.
\end{proof}

\section{Geometric Partial $1$-Planarity}\label{sec:GP1P}

In the straight-line version of the partial planarity problem, we ask whether for a given $G$ and $F \subseteq E(G)$, there is a straight-line drawing of $G$ in which the edges of $F$ are free of crossings. We cannot settle the complexity of this problem, but we have a suggestive result for a generalized version. Suppose we are allowed to specify sets $F_k \subseteq E(G)$, and ask whether $G$ has a straight-line drawing in which all edges in $F_k$ have at most $k$ crossings, for every $k$. The problem posed by Angelini, Binucci, Da Lozzo, Didimo, Grilli, Montecchiani, Patrignani, and Tollis~\cite{ABDLDGMPT13a} corresponds to specifying a set $F_0$ of crossing-free edges. We will show that if instead we specify a set $F_1$ of edges that may be crossed at most once, the problem has the same complexity as deciding the truth of statements in the existential theory of the reals; in the terminology introduced in~\cite{SS09, S10b}, it is {\em \ER-complete}. In analogy with the notion of $1$-planarity (in which every edge may be crossed at most once), we call the problem {\em geometric partial $1$-planarity}.

\begin{remark}[Equivalent Drawings]
Geometric $1$-planarity was first studied by Eggleton~\cite{E86} and Thomassen~\cite{T88}, and more recently in~\cite{HELP12}, and several other papers, but with one important difference: in these papers one is given an intial $1$-planar drawing of $G$ and asks whether there is an equivalent geometric $1$-planar drawing, where two drawings are {\em equivalent} if they have the same facial structure (for this definition to make sense, we consider crossings to be vertices). With this stronger notion, Thomassen~\cite{T88} was able to identify forbidden subconfigurations, which led to a linear-time testing algorithm~\cite{HELP12}. Similarly, Nagamochi~\cite{N13} shows that if we are given a drawing of $G$ and a $2$-connected, spanning subgraph $S$ of $G$, one can test in linear time whether there is an equivalent drawing of $G$ in which edges of $S$ are free of crossings.
\end{remark}

We will not give a formal definition of \ER\ and \ER-completeness (that can be found in~\cite{SS09, S10b}), instead we will work with \STRETCH, a complete problem for the class. This is just like working with \SAT\ (or any other \NP-complete problem) rather than the formal class \NP.

An {\em arrangement of pseudolines} in the plane is a collection of $x$-monotone curves (that is, each pseudoline has exactly one crossing with every vertical line) so that every pair of pseudolines crosses exactly once. An arrangement of pseudolines is {\em stretchable} if all pseudolines can be replaced by straight lines so that the order of crossings along the lines remains the same. See Figure~\ref{fig:arrs} for an example of a pseudoline arrangement, and an equivalent straight-line arrangement.

\begin{figure}[htb]
\begin{center}
\begin{tabular}{cp{0.3in}c}
\begin{pspicture}(0,0)(100,100)

\pscurve(0,90)(30,30)(70,10)(100,10)
\pscurve(5,60)(60,40)(90,30)
\pscurve(10,30)(20,35)(50,70)(70,70)(95,60)
\pscurve(5,10)(80,80)(100,90)
\end{pspicture}

&&

\begin{pspicture}(0,0)(100,100)
\pnode(1,96.6){a1}
\pnode(60,20){a5}
\pnode(5.3,83.3){a2}
\pnode(79,45){a6}
\pnode(12.6,63.3){a3}
\pnode(90,70){a7}
\pnode(20,46.6){a4}
\pnode(95.6,86.6){a8}

\ncline{a1}{a5}
\ncline{a2}{a6}
\ncline{a3}{a7}
\ncline{a4}{a8}
\end{pspicture}

\end{tabular}
\end{center}
\caption{{\em (Left)} A pseudoline arrangement. {\em (Right)} A straight-line arrangement equivalent to the pseudoline arrangement on the left.}\label{fig:arrs}
\end{figure}
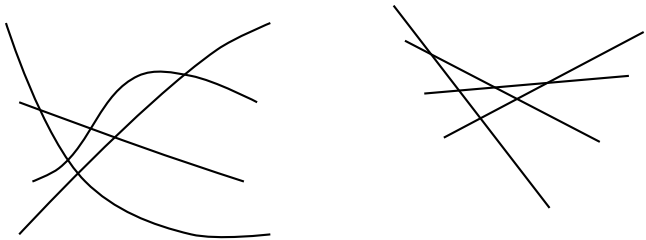

Mn\"{e}v~\cite{M88} showed that \STRETCH, the problem of deciding whether an arrangement of pseudolines is stretchable, is computationally equivalent to deciding the truth of a sentence in the existential theory of the real numbers (for an accessible treatment of Mn\"{e}v's proof, see Shor~\cite{S91}).\footnote{Mn\"{e}v actually showed a stronger result, his universality theorem, here we are only interested in the computational aspects.} This led to the introduction of the complexity class \ER, which contains all problems which can be translated in polynomial time to a sentence in the existential theory of the reals, see~\cite{SS09, S10b} for more details. Similar to the theory of \NP-completeness, there are \ER-complete problems including stretchability, and truth in the existential theory of the reals, but many other problems as well, such as the rectilinear crossing number (there is a wikipedia page, for example~\cite{W12}). We note that \ER\ contains \NP, since the existential theory of the real numbers easily encodes satisfiability, and in turn \ER\ is contained in \PSPACE, due to a famous result by Canny~\cite{C88}. Therefor any \ER-complete problem, such as partial geometric $1$-planarity is \NP-hard, and can be solved in \PSPACE.

\begin{theorem}\label{thm:PG1P}
 Partial Geometric $1$-Planarity is \ER-complete.
\end{theorem}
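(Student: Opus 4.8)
The plan is to prove the statement in two halves: that Partial Geometric $1$-Planarity lies in \ER, and that it is \ER-hard, the latter by a reduction from \STRETCH.

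Membership is routine. Introduce a pair of real variables for the coordinates of each vertex of $G$. Over these variables, ``this is a straight-line drawing'' is a conjunction of polynomial sign conditions (each edge is a non-degenerate segment; and, as a mild general-position assumption that can be made without loss of generality by a perturbation argument, no vertex lies on a non-incident edge). For an edge $e\in F_1$, the predicate ``$f$ crosses $e$'' is a Boolean combination of signs of $2\times 2$ determinants in the vertex coordinates, and, since two segments meet at most once, ``$e$ is crossed at most once'' is the finite conjunction, over all unordered pairs $\{f,g\}$ of distinct edges each independent of $e$, of $\neg(\,\text{``}f\text{ crosses }e\text{''}\wedge\text{``}g\text{ crosses }e\text{''}\,)$. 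Conjoining over all $e\in F_1$ yields a single existential sentence over $\RN$ with no quantifier alternation, so by Canny's theorem the problem lies in \PSPACE, and in particular it is in \ER.

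For hardness I would reduce from \SSTRETCH. Given a simple arrangement $A$ of $n$ pseudolines with fixed combinatorial type (encoded by its wiring diagram, i.e.\ for each $\ell_i$ the left-to-right order $\sigma_i$ of its $n-1$ crossings), build $(G,F_1)$ as follows. Represent each $\ell_i$ by a ``line object'' --- a single edge $\lambda_i=\alpha_i\beta_i$, or a path $\alpha_i=w_i^0w_i^1\cdots w_i^n=\beta_i$ with one interior vertex per crossing on $\ell_i$ --- and keep these edges \emph{out of} $F_1$, since each of them is meant to be crossed $n-1$ times. Around them add auxiliary gadgets whose distinguished edges \emph{are} placed in $F_1$: a rigid convex ``frame'' (a small planar gadget) forcing the $2n$ endpoints $\alpha_i,\beta_i$ into a fixed convex cyclic order in which every pair $\lambda_i,\lambda_j$ is forced to cross; ``ordering'' gadgets that, for each $i$, force the $n-1$ crossings along the line object of $\ell_i$ to occur in the order $\sigma_i$ (a chain of small ``betweenness'' modules, one per consecutive triple of crossings on $\ell_i$, with the frame guaranteeing that betweenness along the segment agrees with betweenness along the full line); and, if convenient, ``crossing-forcing'' gadgets $X_{ij}$ certifying $\lambda_i\cap\lambda_j\ne\emptyset$. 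Each gadget is designed so that it admits a straight-line drawing meeting its one-crossing budget precisely when the line objects it touches are in the intended configuration. Then a straight-line drawing of $G$ with every $F_1$-edge crossed at most once restricts, on the line objects, to an arrangement of $n$ segments whose crossing pattern and whose order of crossings along each segment are those of $A$; extending each segment to the full line it spans introduces no new crossing (each of the $n$ lines already meets every other exactly once inside the frame), yielding a line arrangement combinatorially equal to $A$, so $A$ is stretchable. Conversely, from a stretching of $A$ one draws the line edges as the corresponding segments and realizes every gadget locally in a small neighbourhood of the crossing or endpoint it sits on; by design this respects all $F_1$-constraints. The construction is clearly polynomial, so $(G,F_1)$ is a yes-instance iff $A$ is stretchable.

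The hard part will be the gadgets --- turning the \emph{weak} constraint ``crossed at most once'' into the \emph{strong} conclusions ``these two prescribed segments must cross'' and ``these three crossings occur in this order,'' while keeping the gadgets small and ``inert'' enough that their edges do not accidentally cross the line objects or one another, and that their internal slack does not open up unintended drawings. The natural approach to a crossing-forcing gadget is to embed $\lambda_i,\lambda_j$ into a subdivided non-planar obstruction (a $K_5$- or $K_{3,3}$-type configuration) so that \emph{some} independent pair is forced to cross, and then to ``burn off'' the single-crossing budget on every other independent pair of the obstruction --- making each such pair either forced incident in the plane so it cannot cross, or spend its one crossing inside a decoy sub-gadget --- so that the unavoidable crossing has nowhere left to go but onto $\{\lambda_i,\lambda_j\}$; this recursion must ``bottom out'' at edges that are crossing-free for structural reasons, e.g.\ edges on the outer boundary of the rigid frame. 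Making this airtight, and likewise making the betweenness modules genuinely force the crossing orders while staying locally realizable, is where essentially all of the work lies; reducing from \SSTRETCH\ rather than general \STRETCH, and using a rigid convex frame to normalize the outer structure, are there precisely to keep this bookkeeping manageable.
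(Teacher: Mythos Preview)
Your membership argument is fine and is essentially what the paper does.

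For hardness you have the right reduction source and the right overall shape---represent each pseudoline by a segment kept out of $F_1$, and surround it with gadgetry in $F_1$ that pins down the combinatorics---but there is a genuine gap: you never actually build the gadgets. You say so yourself (``making this airtight \ldots\ is where essentially all of the work lies''). A proof plan is not a proof, and the ``betweenness modules'' and ``crossing-forcing'' subgraphs you sketch are not specified well enough to verify either direction of the reduction. In particular, your proposed mechanism for forcing a prescribed crossing---embed $\lambda_i,\lambda_j$ in a $K_5$/$K_{3,3}$-type obstruction and ``burn off'' the one-crossing budget on every other independent pair via decoys---is a hope, not a construction: you would have to exhibit the decoys, argue they cannot interact with the line segments or with one another, and show the recursion bottoms out. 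None of that is present.

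The paper closes exactly this gap with two concrete devices you are missing. First, a $K_6$ lemma (Lemma~\ref{lem:K6}): in any drawing in which all edges of a $K_6$ have at most one crossing, no $K_6$-edge can cross an edge of a vertex-disjoint cycle. Identifying every frame edge with an edge of a fresh $K_6$ therefore makes the whole frame crossing-free relative to the rest of $G_{\cal A}$ (every non-frame edge lies on a cycle vertex-disjoint from the relevant $K_6$). This fixes the cyclic order of the $2n$ pseudoline endpoints along the boundary cycle and hence forces every pair of pseudoline segments to cross inside---no separate crossing-forcing gadget is needed. Second, instead of per-line ordering gadgets, the paper adds the \emph{dual graph} of the arrangement (one vertex per face, edges between adjacent faces), with all dual edges in $F_1$; a short path-counting argument then shows every face vertex must lie on the correct side of every line, which pins down the entire combinatorial type of the arrangement at once. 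Both ideas are short and self-contained and completely sidestep the recursive gadget-building your outline anticipates.
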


In particular, we conclude that the problem is \NP-hard, and lies in \PSPACE. For the proof we make use of a simple gadget.

\begin{lemma}\label{lem:K6}
  There is no drawing of a $K_6$ and a vertex-disjoint cycle $C$ so that all edges in $K_6$ have at most one crossing, and there is a crossing between an edge of $K_6$ and the cycle.
\end{lemma}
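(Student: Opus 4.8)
The plan is to assume such a drawing exists and to derive a contradiction from the rigidity forced by an edge crossing~$C$. We may assume that $C$ is drawn as a simple closed curve (this is automatic when $C$ is a triangle, the only case needed later; in general one argues with the mod-$2$ regions of~$C$), and that the drawing is good. Since every edge of $K_6$ is crossed at most once, an edge of $K_6$ that crosses $C$ crosses it exactly once and is otherwise crossing-free, so by the Jordan curve theorem its endpoints lie on opposite sides of~$C$. Let $A$ and $B$ be the sets of vertices of $K_6$ inside and outside~$C$; both are non-empty. Then every $A$--$B$ edge crosses $C$ an odd, hence (by the crossing bound) exactly one, number of times and has no other crossing; every edge inside $A$ (resp.\ inside $B$) crosses $C$ an even, hence zero, number of times, so it lies in the closed disc cut off on that side and can cross only other such edges. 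By the inside/outside symmetry of the sphere we may assume $1\le|A|\le 3$, in which case the edges inside $A$ are pairwise adjacent, hence crossing-free: they form no edge, a single arc, or a crossing-free triangle. We argue by cases on $|A|$.

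\emph{Case $|A|=1$.} Writing $A=\{w_0\}$, contract the closed disc containing $w_0$ to a point~$w$. This yields a drawing of $K_6$ on $B\cup\{w\}$ in which the five edges at $w$ are crossing-free and every edge is still crossed at most once. Relabel $B=\{b_1,\dots,b_5\}$ in the rotation order at~$w$. The closed curve $\Gamma:=wb_1\cup b_1b_3\cup b_3w$ separates $b_2$ from $b_4$ and~$b_5$ (read off from the rotation at $w$ together with the crossing-freeness of $wb_2,wb_4,wb_5$), so the edges $b_2b_4$ and $b_2b_5$ each cross~$\Gamma$; since they cannot cross the crossing-free edges $wb_1$ or $wb_3$, they both cross $b_1b_3$, and so $b_1b_3$ is crossed at least twice --- a contradiction. (Equivalently: $K_5$ has no drawing with all five vertices on a common face in which every edge is crossed at most once.)

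\emph{Cases $|A|=2$ and $|A|=3$.} Here we compare, from the two sides of $C$, the cyclic order in which the $A$--$B$ edges puncture~$C$: the $|A|\cdot|B|$ puncture points sit in a cyclic order on $C$ that must agree (up to reflection) whether read from the inside or the outside. The key structural claim is that \emph{on each side, the puncture points of the edges leaving a fixed vertex occupy a single contiguous arc of~$C$}. On the $A$-side this is a short planarity argument: the transversal arcs are disjoint, crossing-free and avoid the crossing-free within-$A$ configuration, so the tree formed by one vertex of $A$ together with its transversal arcs (and the within-$A$ arc, if $|A|=2$), respectively the annulus between the within-$A$ triangle and $C$ (if $|A|=3$), confines all remaining transversal arcs to a single sector, which forces contiguity. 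On the $B$-side one argues the same way when $|B|=3$ (using the within-$B$ triangle); when $|B|=4$ one instead observes that for each $b\in B$ its two transversal arcs together with a suitable sub-arc of $C$ bound a region empty of everything else, so that sub-arc carries no other puncture point and $b$'s two puncture points are cyclically adjacent. Thus the cyclic sequence of puncture points is simultaneously grouped into $|A|$ contiguous blocks indexed by the vertices of $A$ and into $|B|$ contiguous blocks indexed by the vertices of $B$. For $|A|=|B|=3$ this is impossible: the three puncture points sharing a fixed $B$-label lie one in each of the three $A$-blocks of size $3$, and such a triple can never be cyclically consecutive. For $|A|=2$, $|B|=4$ the $A$-blocks split the eight puncture points into positions $\{1,2,3,4\}$ and $\{5,6,7,8\}$ along $C$, while the $B$-side requires the two puncture points of each $b\in B$ to be cyclically adjacent; but only the position pairs $\{4,5\}$ and $\{8,1\}$ straddle the two $A$-blocks, so at most two of the four vertices of $B$ can satisfy this --- a contradiction.

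The only genuinely non-routine ingredient is the contiguity claim for the puncture points; everything else is bookkeeping with cyclic orders and the Jordan curve theorem. I expect the $|A|=3$ sub-case (the annulus argument plus the $3\times 3$ incompatibility) to come out cleanest, the $|A|=2$ sub-case to be the most bookkeeping-heavy, and the $|A|=1$ sub-case to be the one that does not reduce to the comparison of cyclic orders and genuinely needs the separate $\Gamma$-argument.
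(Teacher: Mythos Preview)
Your approach is quite different from the paper's. The paper gives a short local argument: if $e=uv$ is the $K_6$-edge crossing $C$, then $u$ and $v$ lie on opposite sides of $C$ (in the mod-$2$ sense), so any edge incident to $u$ and any edge incident to $v$ cannot cross each other---one of them would have to cross $C$ as well, exceeding its crossing budget. Hence the four triangles $uvw_i$ (with $w_1,\dots,w_4$ the remaining vertices of $K_6$) have pairwise non-crossing sides, which forces them to be nested on the sphere; the edge between the innermost and outermost $w_i$ then has to cross two of these triangles and cannot use $e$ (already crossed by $C$), giving two crossings. No case analysis is needed.

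Your global partition-and-count argument is a legitimate alternative, and the cases $|A|=1$ and $|A|=3$ work as you describe. There is, however, a genuine gap in the $|A|=2$ case on the $B$-side. You assert that for each $b\in B$ its two transversal arcs together with a sub-arc of $C$ bound a region ``empty of everything else'', making $b$'s two puncture points cyclically adjacent. This does \emph{not} follow from the transversal arcs alone: four non-crossing cherries in a disc may well be nested (puncture pattern $(1,8),(2,7),(3,6),(4,5)$), and then the two middle cherries have no empty cap on either side. What actually rules this out is the $K_4$ on $B$: each within-$B$ edge lies outside $C$, does not cross any transversal arc, and hence is confined to a single region of the cherry decomposition; for all six edges $b_ib_j$ to exist, every pair of cherries must bound a common region, which forces the dual tree to be a star and hence the adjacent-pair pattern. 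This step is a real idea, not bookkeeping, and your ``one observes'' skips it.

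A smaller point: invoking mod-$2$ regions is enough to define $A$ and $B$ when $C$ is self-crossing, but your later disc-contraction and annulus arguments tacitly use that the two sides of $C$ are discs. The paper's argument only ever uses the mod-$2$ separation and goes through verbatim for arbitrary drawings of $C$.
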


\begin{proof}[Proof of Lemma~\ref{lem:K6}]
 Suppose there were a drawing as described in the lemma, in which a $K_6$-edge $e = uv$ crosses a cycle edge $f \in E(C)$. 
 Then $e$ cannot cross any of the edges in $E(C) - \{f\}$, since it has at most one crossing, and thus no edge incident to $u$ can cross an edge incident to $v$: to have a common point, one of them would have to cross $C$, but then it would have two crossings, one with the cycle, and one with the other edge. Therefore, the edges adjacent to $e$ do not cross each other at all. This implies that the drawing of the $K_6$ contains $4$ triangles with a shared edge $e$ whose other edges do not cross each other. On the sphere, there is only one such drawing: $4$ nested triangles (with a common base). But this implies that two of the endpoints of those triangles are separated by the other two triangles, which means the original endpoints cannot be joined by an edge in a $1$-planar drawing of the $K_6$, since it would have to cross the other two triangles (it cannot cross $e$, since $e$ already has a crossing).
\end{proof}

\begin{proof}[Proof of Theorem~\ref{thm:PG1P}]
 The problem can easily be expressed using an existentially quantified statement over the real numbers: use the existential quantifiers to find the locations of the vertices of the graph; once the vertices are located, it is easy to express that each edge in $F$ is crossed at most once. This shows that the problem lies in \ER.

 Since stretchability of pseudoline arrangements is \ER-complete, it is sufficient to show that stretchability can be reduced to partial geometric $1$-planarity to establish \ER-hardness of partial geometric $1$-planarity. Let ${\cal A}$ be an arbitrary arrangement of pseudolines. We construct a graph $G_{\cal A}$ and a set of edges $F \subseteq G_{\cal A}$ so that ${\cal A}$ is stretchable if and only if $G_{\cal A}$ has a straight-line drawing in which every edge in $F$ has at most one crossing.

 Let $R$ be a parabola-shaped region (boundary of the form $y = x^2 + c$ for some constant $c \in \RN$)
 so that all crossings of pseudolines in ${\cal A}$ lie within the region $R$. Let $V_{\cal A}$ be the intersection points of pseudolines with the parabolic boundary of $R$ (we can assume that all crossings of pseudolines lie in the convex hull of $V_1$). The region $R$ is separated by ${\cal A}$ into faces, some of them adjacent to the boundary of $R$, and some of them inner faces of the arrangement. 
 We choose a set of vertices $V_I$ consisting of an interior vertex for each inner face of the arrangement; for each face on the boundary of $R$, we pick a vertex on the interior of a boundary arc of the face, let $V_B$ of those boundary vertices; note that all faces except the infinite face, are incident to a unique boundary arc; the infinite face is incident to two boundary arcs, of which we pick one arbitrarily to place the $V_B$-vertex. Finally, pick a vertex $p$ below $R$ so that $p$ can {\em see} all vertices of $V_{\cal A} \cup V_B$; that is, a straight-line segment between $p$ and any vertex in $V_{\cal A} \cup V_B$ does not cross the boundary of $R$. Let $V = V_{\cal A} \cup V_R \cup V_I \cup \{p\}$.
 
 For every two vertices in $V_{\cal A}$ belonging to the same pseudoline, add an edge between those vertices. Add a {\em frame} as follows: connect the vertices of $V_{\cal A} \cup V_B$ by a cycle that corresponds to the order of those vertices along the boundary of $R$, and connect $p$ to every vertex in $V_{\cal A} \cup V_B$ by an edge. Identify each edge of the frame with an edge in a (new) $K_6$. Finally, add the dual graph of the line arrangement to $V_I \cup V_B$. Let $F$ consist of all edges, except for the edges corresponding to the original pseudolines. See Figure~\ref{fig:ex} for an example.
 
 \begin{figure}[htb]
\begin{center}
\begin{pspicture}(0,-10)(300,300)

\parabola[linecolor=gray](0,300)(150,50)

\put(255,30){\small
\begin{tabular}{ll}
$\bullet$ & $V_{\cal A}$ \\
$\circ$ & $V_B$ \\
$\blacktriangle$ & $V_I$ \\
{\footnotesize $\blacksquare$} & $p$ \\[0.1in]
\begin{pspicture}(0,0)(10,5)\psline(0,2.5)(10,2.5) \end{pspicture} & ${\cal A}$ \\
\begin{pspicture}(0,0)(10,5)\psline[linestyle=dashed](0,2.5)(10,2.5) \end{pspicture} & dual of ${\cal A}$ \\
\begin{pspicture}(0,0)(10,5)\psline[linestyle=dotted](0,2.5)(10,2.5) \end{pspicture} & frame \\
\begin{pspicture}(0,0)(10,5)\psline[linecolor=gray](0,2.5)(10,2.5) \end{pspicture} & $\partial R$ 
\end{tabular}}

\cnode*(3,290){2pt}{a1}
\cnode*(180,60){2pt}{a5}
\cnode*(15.5,250){2pt}{a2}
\cnode*(237.5,135){2pt}{a6}
\cnode*(38,190){2pt}{a3}
\cnode*(270,210){2pt}{a7}
\cnode*(60,140){2pt}{a4}
\cnode*(287.5,260){2pt}{a8}

\ncline[linewidth=1.4]{a1}{a5}
\ncline[linewidth=1.4]{a2}{a6}
\ncline[linewidth=1.4]{a3}{a7}
\ncline[linewidth=1.4]{a4}{a8}

\cnode(9.5,270){2pt}{b1}
\cnode(26.2,220){2pt}{b2}
\cnode(48.2,165){2pt}{b3}
\cnode(82.7,100){2pt}{b4}
\cnode(201.8,80){2pt}{b5}
\cnode(247,155){2pt}{b6}
\cnode(279,235){2pt}{b7}
\cnode(297,290){2pt}{b8}

\dotnode*[dotstyle=triangle, dotscale=1.6](82,205){i1}
\dotnode*[dotstyle=triangle, dotscale=1.6](114,180){i2}
\dotnode*[dotstyle=triangle, dotscale=1.6](145,192){i3}

\dotnode*[dotstyle=square, dotscale=1.6](150,0){p}\nput{-45}{p}{$p$}

\ncline[linestyle=dashed]{b2}{i1}
\ncline[linestyle=dashed]{b3}{i2}
\ncline[linestyle=dashed]{b4}{b5}
\ncline[linestyle=dashed]{b5}{i2}
\ncline[linestyle=dashed]{b6}{i3}
\ncline[linestyle=dashed]{b8}{i1}
\ncline[linestyle=dashed]{b8}{i3}
\ncline[linestyle=dashed]{b8}{b1}
\ncline[linestyle=dashed]{i1}{i2}
\ncline[linestyle=dashed]{i1}{i3}
\ncline[linestyle=dashed]{i2}{i3}

\nccurve[linestyle=dotted, ncurv=0.05]{a1}{b1}
\nccurve[linestyle=dotted, ncurv=0.05]{b1}{a2}
\nccurve[linestyle=dotted, ncurv=0.05]{a2}{b2}
\nccurve[linestyle=dotted, ncurv=0.05]{b2}{a3}
\nccurve[linestyle=dotted, ncurv=0.05]{a3}{b3}
\nccurve[linestyle=dotted, ncurv=0.05]{b3}{a4}
\nccurve[linestyle=dotted, ncurv=0.05]{a4}{b4}
\nccurve[linestyle=dotted, ncurv=0]{b4}{a5}
\nccurve[linestyle=dotted, ncurv=-0.2]{a5}{b5}
\nccurve[linestyle=dotted, ncurv=-0.05]{b5}{a6}
\nccurve[linestyle=dotted, ncurv=-0.3]{a6}{b6}
\nccurve[linestyle=dotted, ncurv=-0.1]{b6}{a7}
\nccurve[linestyle=dotted, ncurv=-0.3]{a7}{b7}
\nccurve[linestyle=dotted, ncurv=-0.3]{b7}{a8}
\nccurve[linestyle=dotted, ncurv=-0.2]{a8}{b8}
\nccurve[linestyle=dotted, ncurv=0]{b8}{a1}

\nccurve[linestyle=dashed, ncurv=0.2, angleA=-30,angleB=80]{b1}{b2}
\nccurve[linestyle=dashed, ncurv=0.2, angleA=-30,angleB=80]{b2}{b3}
\nccurve[linestyle=dashed, ncurv=0.2, angleA=-30,angleB=80]{b3}{b4}
\nccurve[linestyle=dashed, ncurv=0.2,angleA=90,angleB=-140]{b5}{b6}
\nccurve[linestyle=dashed, ncurv=0.2,angleA=90,angleB=-140]{b6}{b7}
\nccurve[linestyle=dashed, ncurv=0.2,angleA=90,angleB=-140]{b7}{b8}

\nccurve[linestyle=dotted, angleA = 170, angleB=-90]{p}{a1}
\nccurve[linestyle=dotted, angleA = 170, angleB=-90]{p}{a2}
\nccurve[linestyle=dotted, angleA = 170, angleB=-90]{p}{a3}
\nccurve[linestyle=dotted, angleA = 170, angleB=-90]{p}{a4}
\nccurve[linestyle=dotted, angleA = 10, angleB=-90]{p}{a5}
\nccurve[linestyle=dotted, angleA = 10, angleB=-90]{p}{a6}
\nccurve[linestyle=dotted, angleA = 10, angleB=-90]{p}{a7}
\nccurve[linestyle=dotted, angleA = 10, angleB=-90]{p}{a8}
\nccurve[linestyle=dotted, angleA = 170, angleB=-90]{p}{b1}
\nccurve[linestyle=dotted, angleA = 170, angleB=-90]{p}{b2}
\nccurve[linestyle=dotted, angleA = 170, angleB=-90]{p}{b3}
\nccurve[linestyle=dotted, angleA = 170, angleB=-90]{p}{b4}
\nccurve[linestyle=dotted, angleA = 10, angleB=-90]{p}{b5}
\nccurve[linestyle=dotted, angleA = 10, angleB=-90]{p}{b6}
\nccurve[linestyle=dotted, angleA = 10, angleB=-90]{p}{b7}
\nccurve[linestyle=dotted, angleA = 10, angleB=-90]{p}{b8}
\end{pspicture}
\end{center}
\caption{The graph $G_{\cal A}$ corresponding to the pseudoline arrangement ${\cal A}$ shown in Figure~\ref{fig:arrs}. $K_6$-gadgets are not shown, and some edges are curved to improve readability.}\label{fig:ex}
\end{figure}
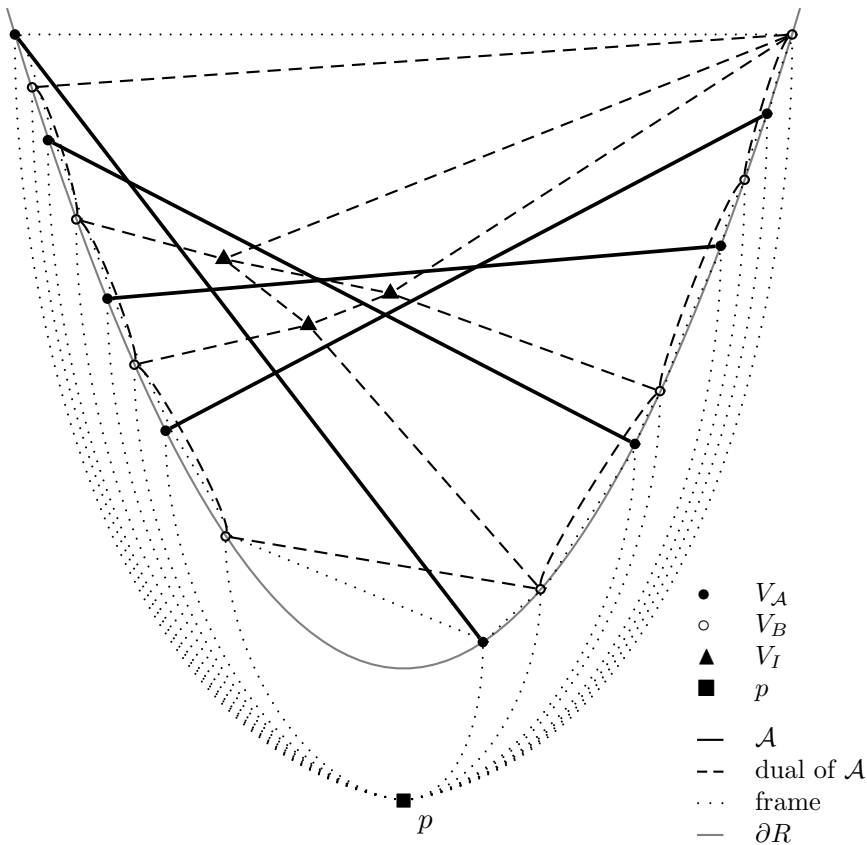
 
 We first note that if ${\cal A}$ is stretchable, then $G_{\cal A}$ has a straight-line drawing in which every edge of $F$ has at most one crossing. To see this, start with a straight-line realization of ${\cal A}$. Perform the construction of $G_{\cal A}$ as we described it above. Because of the convexity of $R$, we can draw the edges of the cycle on $V_{\cal A} \cup V_B$ as well as the straight-line edges to $p$. We can then add a straight-line drawing of each $K_6$ gadget to the frame so that the shared edge is free of crossings (and the remainder of $K_6$ does not participate in unnecessary crossings). Finally, the dual graph of the line arrangement can be added to $V_I \cup V_B$ since any edge connects two vertices in adjacent faces of the line arrangement which is always possible with a straight-line arrangement, unless the resulting edge coincides with the boundary of a cell. This cannot occur, however, since $V_I$ vertices lie in the interior of faces, and the $V_B$ vertices lie on the boundary of the convex region $R$. In this drawing, every edge in $F$ has at most one crossing. Only edges corresponding to the original pseudolines are crossed more than once by dual edges.
 
 For the other direction, start with a straight-line drawing of $G_{\cal A}$ in which all edges in $F$ have at most one crossing. Suppose $f$ is an edge of the frame and let $e$ be another edge in $G_{\cal A}$ which does not belong to $f$'s $K_6$ gadget. If $e$ and $f$ are adjacent, they cannot cross, since the drawing is straight-line. Hence $e$ either belongs to another $K_6$-gadget or is one of the edges between vertices in $V_{\cal A} \cup V_B \cup V_I$. In either case, $e$ belongs to a cycle which is vertex-disjoint from $f$'s $K_6$-gadget, so Lemma~\ref{lem:K6}
 implies that $e$ does not cross $f$. This means that after removal of all the $K_6$-gadgets, the frame is free of crossings. In particular, the cycle $C$ on $V_{\cal A} \cup V_B$ is embedded, and hence its vertices occur in the order determined by the line arrangement ${\cal A}$. Let ${\cal A'}$ be the line arrangement obtained from $G_{\cal A}$ by erasing the frame (and its gadgets), the dual graph, and extending the edges corresponding to pseudolines to infinite lines. We claim that ${\cal A'}$ is equivalent to ${\cal A}$.
 
 We just saw that the order of pseudolines along $C$ is correct, and, since the frame does not cross edges corresponding to pseudolines, every two such edges have to cross inside the region bounded by $C$ (since their endpoints along $C$ alternate in ${\cal A'}$ just as they do in ${\cal A}$. We now show that the dual graph of ${\cal A}$ forces the facial structure of the line arrangement to be unique.
  
 Let $v \in V_I \cup V_B$ be an arbitrary vertex representing a face of the line arrangement, and $e$ an edge corresponding to some line in ${\cal A}$. We show that $v$ lies on the same side of $e$ (within the region bounded by the cycle $C$ through $V_{\cal A} \cup V_B$) in both ${\cal A}$ and ${\cal A'}$, so the two line arrangements have to be equivalent. If $v \in V_B$ this is forced by the cycle $C$; if $v \in V_I$, we argue as follows: let $s$ and $t$ be the $V_B$-vertices closest to $e$ (along $C$) and on the same side of $e$ as $v$. We claim that there is an $st$-path of length $|{\cal A}|-1$ on $V_I$ vertices. To see this, start at $s$. Since $v$ is an inner vertex, $s$ and $v$ do not lie in the same face of the line arrangement, hence there must be an edge $f$ corresponding to a line of ${\cal A}$ so that $s$ and $v$ lie on opposite sides of $f$. More strongly, there must be such an edge $f$ which contributes to the boundary of the cell $v$ lies in (if the two vertices were on the same side of all lines contributing to the boundary of the cell, they would have to be in the same cell); in other words, there is a cell adjacent to the cell containing $v$ (sharing $f$), which is closer to $s$ (note that $t$ and $v$ have to lie on the same side of $f$, since otherwise $s$ and $t$ lie on the same side of both $e$ and $f$, but then they cannot both be closest to $e$). By induction we can now show that there are paths $sv$ and $vt$ containing at most $|{\cal A}|-1$ edges together (since $e$ need never be crossed). But then the path $svt$ in $G_{\cal A}$ on $|{\cal A}|-1$ edges cannot cross $e$, since it has to cross all $|{\cal A}|-1$ edges corresponding to pseudolines (other than $e$). Hence $v$ lies on the same side of $e$ in both ${\cal A}$ and ${\cal A'}$. 
 
 Since ${\cal A'}$ is a straight-line arrangement equivalent to ${\cal A}$, we conclude that ${\cal A}$ is stretchable, which is what we had to show.
\end{proof}

We have to leave open the question whether geometric partial planarity is \ER-hard as well, but as we mentioned earlier, we can show that geometric $1$-planarity is only \NP-complete. This follows from two well-known results: $1$-planarity is \NP-complete~\cite{BG07,KM08,CM13}, and geometric $1$-planarity can be tested in linear time if a rotation system is given~\cite{T88,HELP12}.

\begin{theorem}[Folklore]\label{thm:g1pNP}
 Testing geometric $1$-planarity is \NP-complete.
\end{theorem}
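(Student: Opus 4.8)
The plan is to get membership in \NP\ from a guess-and-verify argument on top of the linear-time algorithm of~\cite{T88,HELP12}, and \NP-hardness from a reduction from $1$-planarity~\cite{BG07,KM08,CM13}; the only real work lies in making the reduction respect straight-line realizability.

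For membership, note first that a $1$-planar drawing of $G$ has at most $|E(G)|$ crossings, so its planarization --- the plane graph obtained by replacing each crossing with a degree-$4$ vertex --- has $O(|V(G)|+|E(G)|)$ vertices and edges, and hence admits a combinatorial description of polynomial size. The \NP-algorithm I have in mind guesses such a combinatorial $1$-planar drawing $D$ of $G$ (a plane graph that is the planarization of a drawing of $G$ in which every edge is crossed at most once) and then runs the algorithm of Thomassen~\cite{T88} and Hong, Eades, Liotta, and Poon~\cite{HELP12} to decide in linear time whether $D$ is equivalent to a straight-line drawing; it accepts iff some guess passes. Correctness is immediate: if $G$ has a straight-line $1$-planar drawing $\Gamma$, then the combinatorial type of $\Gamma$ is a successful guess (witnessed by $\Gamma$ itself), and conversely any straight-line drawing equivalent to a guessed $D$ has the same crossing pairs as $D$ and is therefore a straight-line $1$-planar drawing of $G$.

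For hardness I would reduce from $1$-planarity. The subtlety is that a $1$-planar graph need not be geometric $1$-planar --- there are $1$-planar graphs every $1$-planar drawing of which contains one of the configurations shown in~\cite{T88,HELP12} to obstruct straightening --- so the reduction cannot be the identity map. The plan is instead to run the known \NP-hardness reduction for $1$-planarity and argue that its ``yes'' instances are in fact geometric $1$-planar: the gadgets of~\cite{BG07,KM08,CM13} are assembled from planar pieces laid out along a grid, and a case analysis should show that any $1$-planar drawing of the resulting graph can be normalized to one whose planarization is free of the forbidden configurations (the B- and W-configurations) of~\cite{T88,HELP12}, which the characterization then certifies as stretchable. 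An alternative is to compose the reduction with a transformation $G \mapsto G'$ in which rigidity gadgets in the spirit of the $K_6$ gadget of Lemma~\ref{lem:K6} pin the intended drawing down tightly enough to be automatically straight-line realizable; either way, what is needed is the equivalence ``$G$ is $1$-planar $\iff$ the image is geometric $1$-planar''.

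The hard part will be precisely this last step: controlling straight-line realizability of the reduction's output, i.e.\ verifying that the hard instances of the chosen $1$-planarity reduction never force an unstretchable embedding, and in particular that their $1$-planar drawings can always be chosen to avoid the forbidden configurations of~\cite{T88,HELP12}. Membership is essentially a citation and the high-level reduction is routine, but this bookkeeping is exactly why the statement is only ``folklore'': it is assembled from published pieces without any one source carrying it out.
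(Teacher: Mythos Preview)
Your membership argument is fine and matches the paper's: guess the combinatorial embedding (rotation system / planarization) and apply the linear-time test from~\cite{T88,HELP12}.

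The hardness part, however, is not a proof but a plan with an acknowledged gap. Your first route---re-examining the gadgetry of~\cite{BG07,KM08,CM13} and arguing that yes-instances happen to admit straight-line $1$-planar drawings---would require a reduction-specific case analysis you do not supply, and there is no a priori reason the published reductions avoid the B/W-configurations. Your second route (``rigidity gadgets in the spirit of Lemma~\ref{lem:K6}'') gestures in the right direction but omits the key construction that makes it work.

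The paper's argument is short and avoids any inspection of earlier reductions. Given an arbitrary graph $G$, observe that if $G$ is $1$-planar then, by applying F\'ary's theorem to the planarization, $G$ has a $1$-planar drawing with at most one bend per edge. Now replace each edge of $G$ by a path of length three to obtain $G'$: the three segments give room for the (at most one) crossing and the (at most one) bend to sit on distinct subedges, so $G$ is $1$-planar iff $G'$ has a geometric $1$-planar drawing in which the two outer subedges of every path (those incident to original vertices) are crossing-free. Finally, enforce this last condition by identifying each such outer subedge with an edge of a fresh $K_6$, exactly as in Lemma~\ref{lem:K6}; call the result $H$. Then $G$ is $1$-planar iff $H$ is geometrically $1$-planar, and the reduction is polynomial. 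The missing ingredient in your proposal is precisely this subdivide-then-F\'ary step, which converts the qualitative obstacle ``$1$-planar $\neq$ geometric $1$-planar'' into a uniform, graph-independent fix.
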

\begin{proof}
 The problem lies in \NP, since we can guess the rotation system, and then use the linear time algorithm from~\cite{HELP12} to check whether there is an obstruction to geometric $1$-planarity with that rotation system. To see \NP-hardness, we use that it is \NP-hard to test $1$-planarity. If a graph $G$ is $1$-planar, then it has a $1$-planar drawing in which each edge has at most one bend: simply apply Fary's theorem to the graph obtained from $G$ by replacing each crossing by a dummy vertex. To avoid that crossings and bends occur at the same location, we replace each edge in $G$ with a path of length three to get a new graph $G'$. Then $G$ is $1$-planar if and only if $G'$ has a geometric $1$-planar embedding in which all edges incident to the original vertices of $G$ are free of crossings. And that we can easily guarantee by identifying all of these edges with an edge of a $K_6$-gadget. Let $H$ be this new graph. Then $G$ is $1$-planar if and only if $H$ is geometrically $1$-planar. Therefore, geometric $1$-planarity is \NP-hard.
\end{proof}

\section{Future Research}

What can we say about traditional approaches to partial planarity? More specifically, can $PQ$-trees or $SPQR$-trees be used to solve this problem?

Recall that a {\em bridge} of $S$ in $G$ is either an edge in $E(G)-E(S)$ with both endpoints in $S$ (a {\em trivial} bridge) or a connected component of $G-S$ together with its edges and vertices of attachment to $S$. Given an embedding of $S$, a group of vertices of $S$ is {\em mutually visible}~\cite{ABFJKPR10} if there is a face of $S$ containing all vertices in the group on its boundary. The poly-line variant can be rephrased as follows: is there a poly-line embedding of $S$ so that for every bridge of $S$ in $G$, the vertices of attachment of the bridge are mutually visible? It seems quite likely that SPQR-trees could be used to decide that question, even in linear time, extending ideas for deciding partially embedded planarity developed in~\cite{ABFJKPR10}. 

Another solution may come from progress on simultaneous embeddings, since partial planarity can be rephrased as a special case of the simultaneous embedding with fixed edges problem (the sunflower case).\footnote{This connection was pointed out to me by Ignaz Rutter.} A family of graphs $(G_i)_{i=1}^n$, has a {\em simultaneous embedding (with fixed  edges)} if there is a drawing of $\cup_{i=1}^n G_i$ in which no two edges belonging to the same graph cross each other. In the {\em sunflower case}, $G_i \cap G_j = H$ for some fixed graph $H$ and all $i < j$. Given $G$ and $S$, we let $G_i$ be $S$ together with a star whose vertices of attachment to $S$ are the same as the vertices of attachment of the $i$-th $G$-bridge of $S$. In a simultaneous embedding of $(G_i)_{i=1}^n$, each star forces the vertices of attachment of the $i$-th bridge to lie in the same face of $S$. Hence, any algorithmic solution of the sunflower case, solves the corresponding partial planarity problem. At this point, this is only known for $S$ being a disjoint union of components which are $2$-connected, or have max-degree at most $3$ (via a Hanani-Tutte argument~\cite{S13a}), but there maybe be more traditional PQ-tree algorithms for this case in the future.

\bibliographystyle{plain}
\bibliography{pcr}

\end{document}